\newtheorem{theorem}{Theorem}
\newtheorem{definition}{Definition}
\newcommand{\Z}{\mathbb{Z}}
\newcommand{\Add}[1]{\textcolor{black}{#1}}	
\newcommand{\Erase}[1]{\if0{#1}\fi}
\begin{document}
\newpage
\clearpage
\title{{\bf Parrondo's game of quantum search \\ based on quantum walk}
\vspace{15mm}} 

\author{
  Taisuke HOSAKA$^{\ast}$ \\
  College of Engineering Science \\
  Yokohama National University \\
  Hodogaya, Yokohama, 240-8501, Japan \\
  e-mail: hosaka-taisuke-pn@ynu.jp \\
  \\ \\
  Norio KONNO \\
  Department of Applied Mathematics, Faculty of Engineering \\ 
  Yokohama National University \\
  Hodogaya, Yokohama, 240-8501, Japan \\
  e-mail: konno-norio-bt@ynu.ac.jp 
}

\date{\empty }

\maketitle

\vspace{80mm}
\noindent
\begin{small}
{\bf Corresponding author$^{\ast}$}: Taisuke Hosaka, 
College of Engineering Science, 
Yokohama National University, 
Hodogaya, Yokohama, 240-8501, 
JAPAN, e-mail: hosaka-taisuke-pn@ynu.jp,
Tel.: +81-45-339-4205, Fax: +81-45-339-4205
\end{small}


\vspace{20mm}

\par\noindent

\clearpage

\begin{abstract}
  The Parrondo game, devised by Parrondo, means that \Erase{two losing games 
  can produce a winning game by a suitable combination of two losing games.}
  \Add{winning strategy is constructed a combination of losing strategy.}
  This situation is called the Parrondo paradox.
  The Parrondo game based on quantum walk and the search algorithm via quantum walk have been widely studied, respectively.
  This paper newly presents a Parrondo game of quantum search based on quantum walk by combining both models.
  Moreover we confirm that Parrondo's paradox exists for our model \Add{on the one- and two-dimensional torus} by numerical simulations.
  Afterwards we show the range in which the paradox occurs is symmetric about the origin on \Add{the $d$-dimensional torus $(d \geq 1)$} with even vertices and one marked vertex.
\end{abstract}
  
\vspace{10mm}
  
\begin{small}
  \par\noindent
  {\bf Keywords}: Quantum walk, Parrondo's game, Quantum search, paradox 
\end{small}
\vspace{10mm}

\section{\bf Introduction \label{sec1}}
Quantum walk (QW), motivated from classical random walk (RW), 
has been studied since around 2000.
QW has different features as compared to RW.
One of the features is localization, that is, the probability of finding quantum walker is positive in the long time.
Because of its properties, QW plays an important role in the quantum search algorithm, see \cite{A,AKR,LS,P,SKW,WS}.
On the other hand, \Erase{Parrondo's paradox is introduced by Parrondo.}
the Parrondo paradox is the situation that a combined \Add{strategy} wins even if each \Add{strategy} loses.
The game with Parrondo's paradox is called Parrondo's game.
The Parrondo paradox has significant applications in many physical and biological systems like \cite{AD,PD}.
Additionally Parrondo's game via QW on the line has been investigated, such as \cite{CB,F,LZG,RB,MG}.
In the previous work, Parrondo's paradox is defined by the relationship between $P_{R}$ and $P_{L}$, where
$P_{R}$ is the probability of the quantum walker being found to the right of the origin and
$P_{L}$ is the probability of the quantum walker being found to the left of the origin.

Inspired by both models, we introduce a Parrondo's game based on QW search and propose 
Parrondo's paradox defined by the average of the success probability of finding marked vertices.
As far as we know, no previous study has studied the Parrondo game via QW search.
Furthermore we find the Parrondo paradox for our model \Add{by numerical simulations}, that is, bad search algorithms produce a good one \Erase{by numerical simulations}.
Besides we get rigorous results as well as numerical ones.
We prove the range in which the paradox occurs is symmetric with respect to the origin
on \Add{$T^{d}_{N}$} with even vertices and one marked vertex\Add{, where $T^{d}_{N}$ denotes the $d$-dimensional torus with $N^{d}$ vertices}.
In other words, our results show that bad search algorithms have the potential to become better algorithms by combining them.
To clarify the properties of the Parrondo game based on QW search will be a benefit for application to quantum information theory.

The rest of this paper is organized as follows.
In Section 2, we present the definition of Parrondo's game via QW search.
Section 3 deals with the numerical simulations for the Parrondo game on \Add{$T^{1}_{N}$ and $T^{2}_{N}$}.
In Section 4, we give a proof of our results on one marked \Add{$T^{d}_{N}$} with even vertices.
Section 5 concludes our results.

\clearpage

\section{{\bf Parrondo's game on QW search} \label{sec2}}
In this paper, we consider discrete-time QWs on $d$-regular graph with $N$ vertices.
The Hilbert space is given by $\mathcal{H}=\mathcal{H}^c\otimes \mathcal{H}^p$,
where $\mathcal{H}^c$ is the coin space spanned by the  orthonormal basis $\left\{ \ket{s}: s=0,1,...,d-1\right\}$ 
and $\mathcal{H}^p$ is the position space spanned by the orthonormal basis $\left\{ \ket{j}: j=0,1,...,N-1\right\}$.
The unitary operator described by $U=S\cdot C$ acts on $\mathcal{H}$, where $S$ is a shift operator and $C$ is a coin operator.
Under a search problem on a given graph, the coin operator is 
\begin{align}
  \label{eq:coin}
  C=C_{M}\otimes \sum_{v \in{M}}\ket{v}\bra{v}+C_{\bar{M}}\otimes \left(I_N-\sum_{v \in{M}}\ket{v}\bra{v}\right),
\end{align}
where $C_{M}$ and $C_{\bar{M}}$ are $d\times d$  matrices.
Here $M$ is the set of marked vertices whose number of elements is $m$.
This definition implies that $C_{M}$ operates marked vertices and $C_{\bar{M}}$ operates non-marked vertices. 

Parrondo's game based on the QW search algorithm is as follows:
We prepare two unitary operators $U_1$ and $U_2$ wrriten as 
\begin{align*}
  U_1=S\cdot C_1, \quad U_2=S\cdot C_2,
\end{align*}
where $C_1$ and $C_2$ have the same form given by Eq.\ (\ref{eq:coin}).
\Add{
  We consider $U_1$ and $U_2$ as strategies to find marked vertex, respectively.
}
In addition, a unitary operator $U_{(n_1,n_2)}$ combined $U_1$ and $U_2$ is denoted by
\begin{align*}
  U_{(n_1,n_2)}=(U_2)^{n_2}(U_1)^{n_1}
\end{align*}
for $n_1,n_2 \in \Z_{>}$, where $\Z_{>}$ is the set of positive integer.
\Add{
  We regard $U_{(n_1,n_2)}$ as a combined strategy of $U_1$ and $U_2$.
}
The initial state $\ket{\Psi_0}$ is the uniform state expressed as 
\begin{align*}
  \ket{\Psi_0}=\frac{1}{\sqrt{dN}}\sum^{d-1}_{s=0}\sum^{N-1}_{j=0}\ket{s,j}.
\end{align*}
\Add{
  Then we try to find marked vertex from all of vertices for a strategy $U$.
  If the marked vertex is found, we win and if not, we lose.
}
We define $\bar{p}_{(n_1,n_2)}(T)$, by
\begin{align}
  \label{eq:mean}
  \bar{p}_{(n_1,n_2)}(T)=\frac{1}{T} \sum \limits _{t=0}^{T-1} \sum \limits _{s=0}^{d-1} \sum \limits _{v \in M}|\bra{s,v}U_{(n_1,n_2)}^{t}\ket{\Psi_0}|^2.
\end{align}
Moreover, taking a limit as $T \to \infty$, we put
\begin{align}
  \label{eq:limit_mean}
  \bar{p}_{(n_1,n_2)}=\lim_{T \to \infty} \bar{p}_{(n_1,n_2)}(T)=\lim_{T \to \infty} \frac{1}{T} \sum \limits _{t=0}^{T-1} \sum \limits _{s=0}^{d-1} \sum \limits _{v \in M} |\bra{s,v}U_{(n_1,n_2)}^{t}\ket{\Psi_0}|^2,
\end{align}
if the right-hand side of Eq.\ (\ref{eq:limit_mean}) exists.
We should remark that $\bar{p}_{(1,0)}=\bar{p}_{(n,0)}$ and $\bar{p}_{(0,1)}=\bar{p}_{(0,n)}$ for $n \in \Z_{>}$.

\Add{
  It is noted that Equations (\ref{eq:mean}) and (\ref{eq:limit_mean}) are considered as the average probability of finding a marked vertex of a single graph for a fixed $T$ times and its limit with respect to $T$ (i.e., $T \rightarrow \infty$), respectively.
}

Here we introduce two types of the Parrondo paradox on QW search via Eq.\ (\ref{eq:limit_mean}).

\begin{definition}
  \mbox{} \\
  \label{def:parrondo}
  \rm{}
  \! If $\bar{p}_{(1,0)}<m/N$, \, $\bar{p}_{(0,1)}<m/N$, and \,$\bar{p}_{(n_1,n_2)}>m/N$ hold, we call it ``positive paradox''. \\
  \quad If $\bar{p}_{(1,0)}>m/N$, \, $\bar{p}_{(0,1)}>m/N$, and \,$\bar{p}_{(n_1,n_2)}<m/N$ hold, we call it ``negative paradox''.
\end{definition}
Note that in Definition \ref{def:parrondo}, the positive paradox means that the success probability is greater than $m/N$ for combined unitary operator $U$,
however, it is less than $m/N$ for $U_1$ only and $U_2$ only, respectively. 
By contrast, the negative paradox means that the success probability is less than $m/N$ for combined unitary operator $U$,
however, it is greater than $m/N$ for $U_1$ only and $U_2$ only, respectively.

\Add{
  In other words, the positive paradox means that a combination of losing strategies
  becomes a winning strategy on average.
  By contrast, the negative paradox means that a combination of winning strategies
  becomes a losing strategy on average.
}

\section{\bf {Numerical results} \label{sec3}}
From now on, we consider Parrondo's game on \Add{$T^{1}_{N}$ and $T^{2}_{N}$}.
We give the shift  operator $S$  and the coin operator $C(\alpha,\beta,\theta)$ \Add{of $T^{1}_{N}$}, by
\begin{align*}
  S= \ket{0}\bra{0} \otimes \sum \limits _{j=0}^{N-1} \ket{j+1}\bra{j}+\ket{1}\bra{1} \otimes \sum \limits _{j=0}^{N-1} \ket{j-1}\bra{j},
\end{align*}
\begin{align*}
  C(\alpha,\beta,\theta)=(-I_2) \otimes \sum_{v\in M}\ket{v}\bra{v}+ 
  \begin{pmatrix}
    e^{i\alpha}\cos{\theta} & e^{-i\beta}\sin{\theta} \\  
    e^{i\beta}\sin{\theta} & -e^{-i\alpha}\cos{\theta} \\
  \end{pmatrix} 
    \otimes \left(I_N-\sum_{v\in M}\ket{v}\bra{v}\right),
\end{align*}
where $\alpha,\beta,\theta \in [0,2\pi)$ and $I_2$ is the $2\times 2$ identity matrix.
\Add{Also unitary operator of $T^{2}_{N}$ is given by the tensor product of the unitary operators of $T^{1}_{N}$.}
We find both positive and negative paradoxes by changing parameters, see Figs.\ \ref{fig:1}, \ref{fig:2}, \ref{fig:3} and \Add{\ref{fig:4}}.
Moreover we simulate the range of $\theta \in [0,2\pi)$ in which the paradox occurs in Figs.\ \ref{fig:scatter1}, \ref{fig:scatter2}, \ref{fig:scatter3}, and \Add{\ref{fig:scatter4}}.
\Add{
  Note that Eq.\ (\ref{eq:limit_mean}) is used to define the paradox, however, we use Eq.\ (\ref{eq:mean}) in Section 3 because it is difficult to get the limit in simulation.
}

\begin{figure}[htbp]
  \begin{minipage}[b]{0.32\linewidth}
    \centering
    \includegraphics[keepaspectratio, scale=0.33]{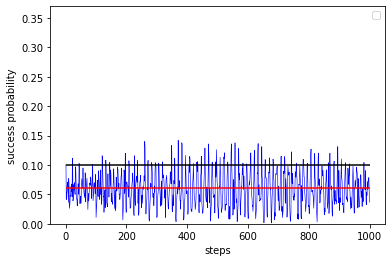}
    \subcaption{}
  \end{minipage}
  \begin{minipage}[b]{0.32\linewidth}
    \centering
    \includegraphics[keepaspectratio, scale=0.33]{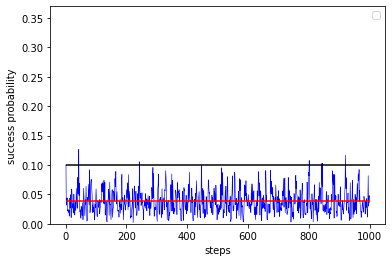}
    \subcaption{}
  \end{minipage} 
  \begin{minipage}[b]{0.32\linewidth}
    \centering
    \includegraphics[keepaspectratio, scale=0.33]{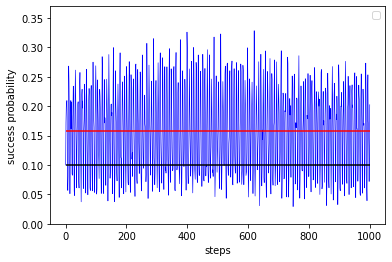}
    \subcaption{}
  \end{minipage} \\
  \begin{minipage}[b]{0.32\linewidth}
    \centering
    \includegraphics[keepaspectratio, scale=0.33]{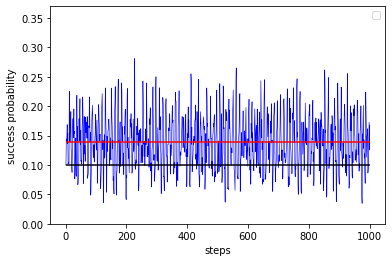}
    \subcaption{}
  \end{minipage}
  \begin{minipage}[b]{0.32\linewidth}
    \centering
    \includegraphics[keepaspectratio, scale=0.33]{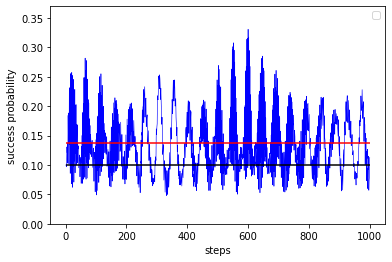}
    \subcaption{}
  \end{minipage} 
  \begin{minipage}[b]{0.32\linewidth}
    \centering
    \includegraphics[keepaspectratio, scale=0.33]{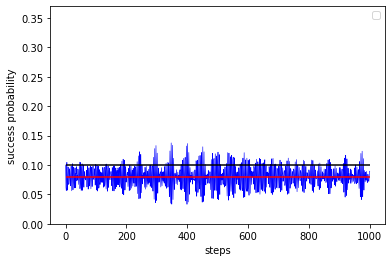}
    \subcaption{}
  \end{minipage}
  \caption{\Add{($T^{1}_{N}$ case)} The black and red lines correspond to $m/N$ and $\bar{p}_{(n_1,n_2)}(T)$, respectively and blue curve corresponds to the success probability.
          (c) combines (a) and (b). (f) combines (d) and (e).
          $N=10,\,T=1000,\,M=\{0\},\,m=1,\,(n_1,n_2)=(1,1)$.
          (a) $S\cdot C(4.90,0.06,3.53)$, (b) $S\cdot C(1.44,3.76,2.39)$,
          (d) $S\cdot C(1.69,3.84,3.64)$, (e) $S\cdot C(2.95,5.15,1.33)$.}
  \label{fig:1}
\end{figure}

\begin{figure}[htbp]
  \begin{minipage}[b]{0.32\linewidth}
    \centering
    \includegraphics[keepaspectratio, scale=0.33]{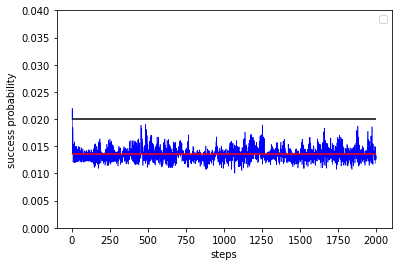}
    \subcaption{}
  \end{minipage}
  \begin{minipage}[b]{0.32\linewidth}
    \centering
    \includegraphics[keepaspectratio, scale=0.33]{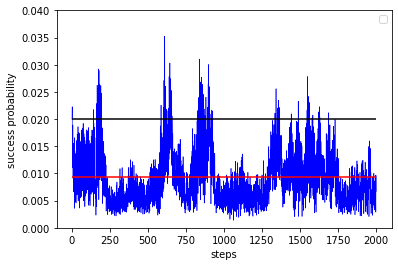}
    \subcaption{}
  \end{minipage} 
  \begin{minipage}[b]{0.32\linewidth}
    \centering
    \includegraphics[keepaspectratio, scale=0.33]{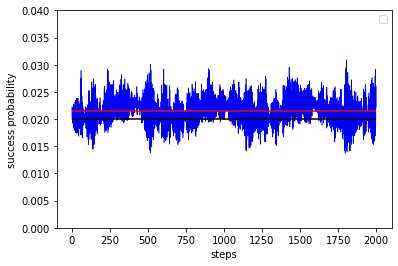}
    \subcaption{}
  \end{minipage} \\

  \begin{minipage}[b]{0.32\linewidth}
    \centering
    \includegraphics[keepaspectratio, scale=0.33]{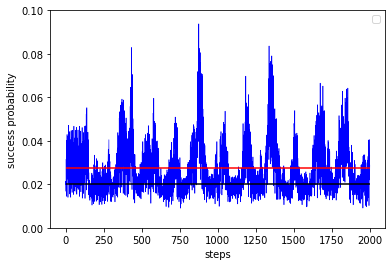}
    \subcaption{}
  \end{minipage}
  \begin{minipage}[b]{0.32\linewidth}
    \centering
    \includegraphics[keepaspectratio, scale=0.33]{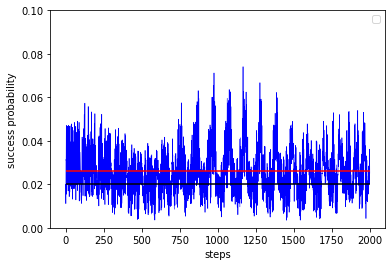}
    \subcaption{}
  \end{minipage} 
  \begin{minipage}[b]{0.32\linewidth}
    \centering
    \includegraphics[keepaspectratio, scale=0.33]{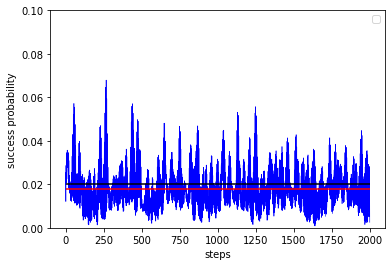}
    \subcaption{}
  \end{minipage} \\
  \caption{\Add{($T^{1}_{N}$ case)} The black and red lines correspond to $m/N$ and $\bar{p}_{(n_1,n_2)}(T)$, respectively and blue curve corresponds to the success probability.
  (c) combines (a) and (b). (f) combines (d) and (e).
          $N=50,\,T=2000,\,M=\{0\},\,m=1,\,(n_1,n_2)=(1,3)$.
            (a) $S\cdot C(4.85,4.93,1.91)$, (b) $S\cdot C(0.94,5.41,4.37)$,
            (d) $S\cdot C(5.78,5.41,1.93)$, (e) $S\cdot C(2.85,3.72,4.18)$.}
  \label{fig:2}
\end{figure}

\begin{figure}[htbp]
  \begin{minipage}[b]{0.32\linewidth}
    \centering
    \includegraphics[keepaspectratio, scale=0.33]{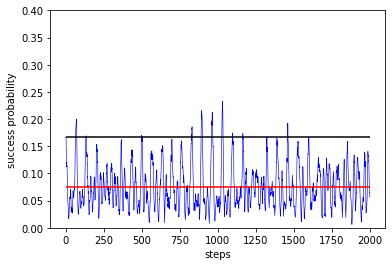}
    \subcaption{}
  \end{minipage}
  \begin{minipage}[b]{0.32\linewidth}
    \centering
    \includegraphics[keepaspectratio, scale=0.33]{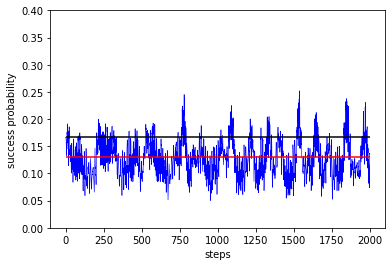}
    \subcaption{}
  \end{minipage} 
  \begin{minipage}[b]{0.32\linewidth}
    \centering
    \includegraphics[keepaspectratio, scale=0.33]{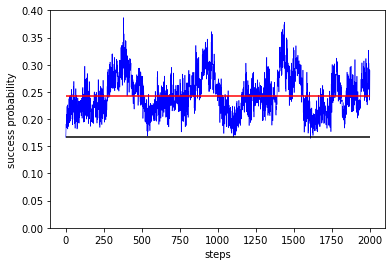}
    \subcaption{}
  \end{minipage} \\

  \begin{minipage}[b]{0.32\linewidth}
    \centering
    \includegraphics[keepaspectratio, scale=0.33]{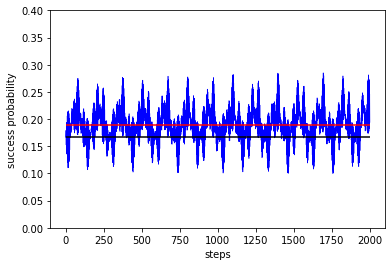}
    \subcaption{}
  \end{minipage}
  \begin{minipage}[b]{0.32\linewidth}
    \centering
    \includegraphics[keepaspectratio, scale=0.33]{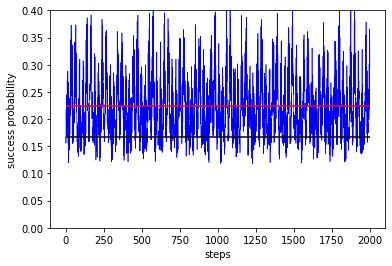}
    \subcaption{}
  \end{minipage} 
  \begin{minipage}[b]{0.32\linewidth}
    \centering
    \includegraphics[keepaspectratio, scale=0.33]{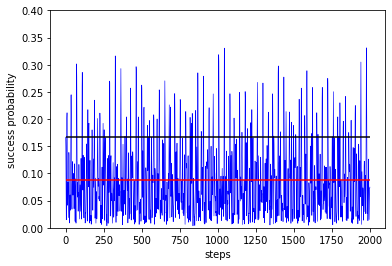}
    \subcaption{}
  \end{minipage} \\
  \caption{\Add{($T^{1}_{N}$ case)} The black and red lines correspond to $m/N$ and $\bar{p}_{(n_1,n_2)}(T)$, respectively and blue curve corresponds to the success probability.
  (c) combines (a) and (b). (f) combines (d) and (e).
            $N=30,\,T=2000,\,m=5,\,(n_1,n_2)=(1,1)$.
            (a),(d) $S\cdot C(2.58,2.60,2.05)$, (b),(e) $S\cdot C(3.49,4.83,1.85)$,
            (a)-(c) $M=\{0,1,2,3,4\}$, (d)-(f) $M=\{0,6,12,18,24\}$.}
  \label{fig:3}
\end{figure}

\begin{figure}[htbp]
  \begin{minipage}[b]{0.32\linewidth}
    \centering
    \includegraphics[keepaspectratio, scale=0.33]{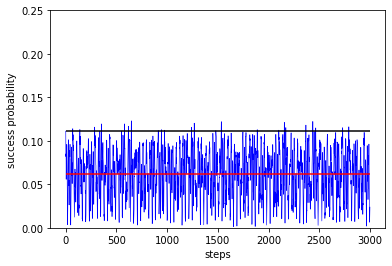}
    \subcaption{}
  \end{minipage}
  \begin{minipage}[b]{0.32\linewidth}
    \centering
    \includegraphics[keepaspectratio, scale=0.33]{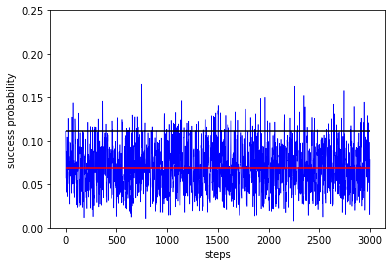}
    \subcaption{}
  \end{minipage} 
  \begin{minipage}[b]{0.32\linewidth}
    \centering
    \includegraphics[keepaspectratio, scale=0.33]{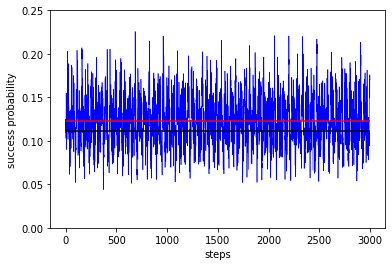}
    \subcaption{}
  \end{minipage} \\
  \begin{minipage}[b]{0.32\linewidth}
    \centering
    \includegraphics[keepaspectratio, scale=0.33]{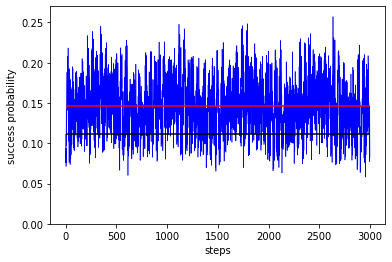}
    \subcaption{}
  \end{minipage}
  \begin{minipage}[b]{0.32\linewidth}
    \centering
    \includegraphics[keepaspectratio, scale=0.33]{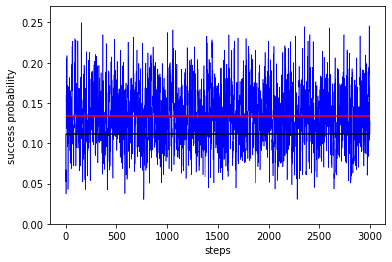}
    \subcaption{}
  \end{minipage} 
  \begin{minipage}[b]{0.32\linewidth}
    \centering
    \includegraphics[keepaspectratio, scale=0.33]{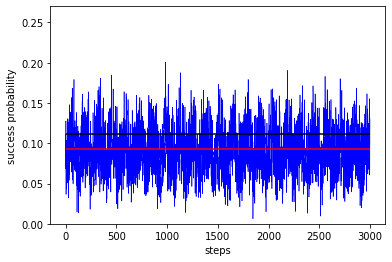}
    \subcaption{}
  \end{minipage}
  \caption{\Add{($T^{2}_{N}$ case)} The black and red lines correspond to $m/N$ and $\bar{p}_{(n_1,n_2)}(T)$, respectively and blue curve corresponds to the success probability.
          (c) combines (a) and (b). (f) combines (d) and (e).
          $N=9,\,T=3000,\,M=\{0\},\,m=1,\,(n_1,n_2)=(1,1)$.
          (a) $S\cdot C(1.38,0.95,0.23)$, (b) $S\cdot C(0.61,1.48,0.65)$,
          (d) $S\cdot C(3.10,3.08,0.42)$, (e) $S\cdot C(3.05,4.13,3.78)$.}
  \label{fig:4}
\end{figure}

\begin{figure}[htbp]
  \begin{minipage}[b]{0.25\linewidth}
    \centering
    \includegraphics[keepaspectratio, scale=0.25]{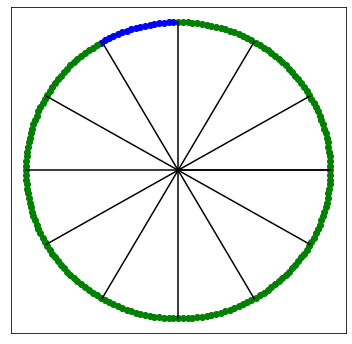}
    \subcaption{}
  \end{minipage}
  \begin{minipage}[b]{0.24\linewidth}
    \centering
    \includegraphics[keepaspectratio, scale=0.25]{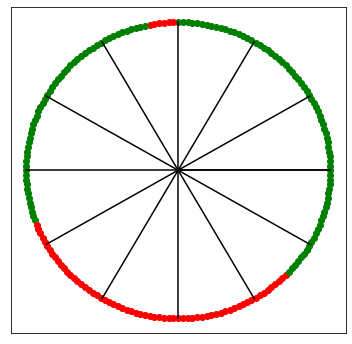}
    \subcaption{}
  \end{minipage} 
  \begin{minipage}[b]{0.24\linewidth}
    \centering
    \includegraphics[keepaspectratio, scale=0.25]{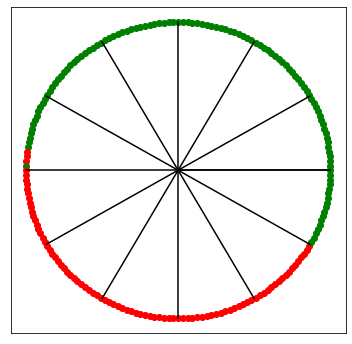}
    \subcaption{}
  \end{minipage}
  \begin{minipage}[b]{0.25\linewidth}
    \centering
    \includegraphics[keepaspectratio, scale=0.25]{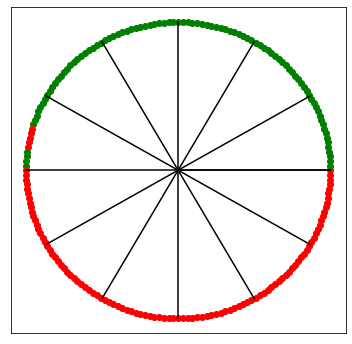}
    \subcaption{}
  \end{minipage} 
  \caption{\Add{($T^{1}_{N}$ case)} The range of $\theta \in [0,2\pi)$ in which the paradox occurs.
          The red and blue \Add{segment} correspond to positive paradox and negative paradox, respectively. The green  \Add{segment} corresponds that
          the paradox does not occur.
          $N=10,\,T=1000,\, (n_1,n_2)=(1,1)$,
          $U_1=S\cdot C(\pi/3,\pi/3,\pi/4),\,U_2=S\cdot C(\pi/3,\pi/3,\theta)$.
  (a) $M=\{0\},\, m=1$, (b) $M=\{0,1\},\, m=2$, (c) $M=\{0,1,2\},\, m=3$, (d) $M=\{0,1,2,3\},\, m=4$.}
  \label{fig:scatter1}
\end{figure}

\begin{figure}[htbp]
  \begin{minipage}[b]{0.25\linewidth}
    \centering
    \includegraphics[keepaspectratio, scale=0.25]{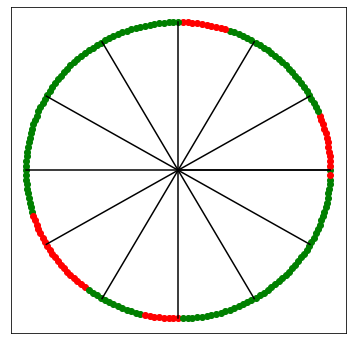}
    \subcaption{}
  \end{minipage}
  \begin{minipage}[b]{0.24\linewidth}
    \centering
    \includegraphics[keepaspectratio, scale=0.25]{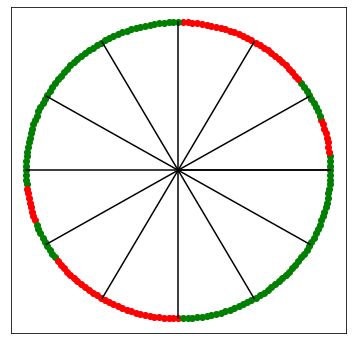}
    \subcaption{}
  \end{minipage} 
  \begin{minipage}[b]{0.24\linewidth}
    \centering
    \includegraphics[keepaspectratio, scale=0.25]{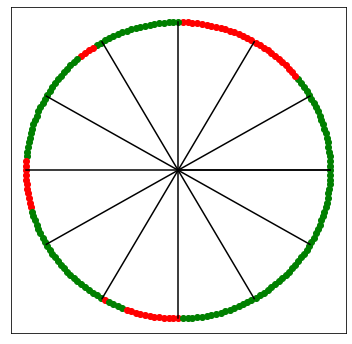}
    \subcaption{}
  \end{minipage}
  \begin{minipage}[b]{0.25\linewidth}
    \centering
    \includegraphics[keepaspectratio, scale=0.25]{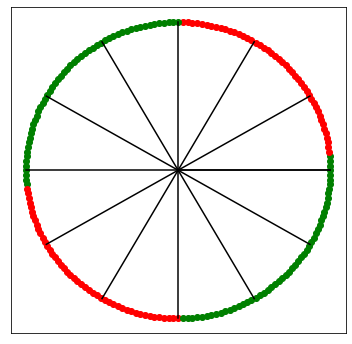}
    \subcaption{}
  \end{minipage} 
  \caption{\Add{($T^{1}_{N}$ case)} The range of $\theta \in [0,2\pi)$ in which the paradox occurs.
  The red  \Add{segment} corresponds to positive paradox. The green  \Add{segment} corresponds that
  the paradox does not occur.
    $T=1000,\,M=\{0\},\,m=1,\,(n_1,n_2)=(1,1),\,U_1=S\cdot C(4\pi/12,5\pi/12,23\pi/12),\quad U_2=S\cdot C(4\pi/12,5\pi/12,\theta)$.
  (a) $N=9$, (b) $N=10$, (c) $N=11$, (d) $N=12$.}
  \label{fig:scatter2}
\end{figure}

\begin{figure}[htbp]
  \begin{minipage}[b]{0.25\linewidth}
    \centering
    \includegraphics[keepaspectratio, scale=0.25]{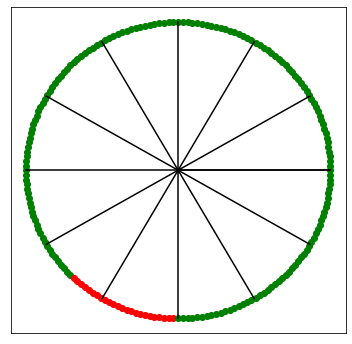}
    \subcaption{}
  \end{minipage}
  \begin{minipage}[b]{0.24\linewidth}
    \centering
    \includegraphics[keepaspectratio, scale=0.25]{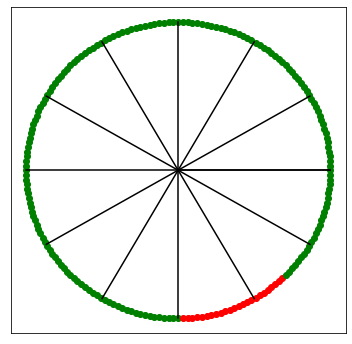}
    \subcaption{}
  \end{minipage} 
  \begin{minipage}[b]{0.24\linewidth}
    \centering
    \includegraphics[keepaspectratio, scale=0.25]{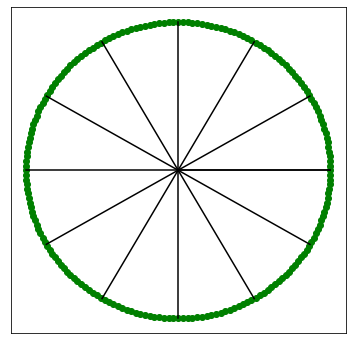}
    \subcaption{}
  \end{minipage}
  \begin{minipage}[b]{0.25\linewidth}
    \centering
    \includegraphics[keepaspectratio, scale=0.25]{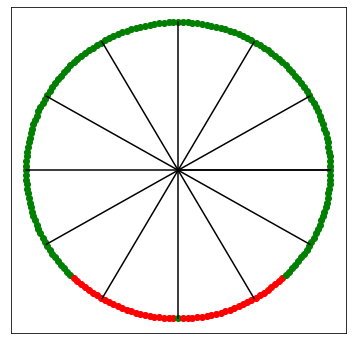}
    \subcaption{}
  \end{minipage} 
  \caption{\Add{($T^{1}_{N}$ case)} The range of $\theta \in [0,2\pi)$ in which the paradox occurs.
  The red  \Add{segment} corresponds to positive paradox. The green  \Add{segment} corresponds that
  the paradox does not occur.
    $N=10,\,M=\{0\},\,m=1,\,T=1000$,
  $U_1=S\cdot C(\pi/3,2\pi/3,\pi/4),\, U_2=S\cdot C(\pi/4,\pi/4,\theta)$.
  (a) $(n_1,n_2)=(1,1)$, (b) $(n_1,n_2)=(1,2)$, (c) $(n_1,n_2)=(2,1)$, (d) $(n_1,n_2)=(2,2)$.}
  \label{fig:scatter3}
\end{figure}

\begin{figure}[htbp]
  \begin{minipage}[b]{0.25\linewidth}
    \centering
    \includegraphics[keepaspectratio, scale=0.25]{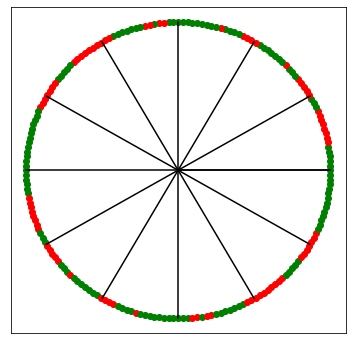}
    \subcaption{}
  \end{minipage}
  \begin{minipage}[b]{0.24\linewidth}
    \centering
    \includegraphics[keepaspectratio, scale=0.25]{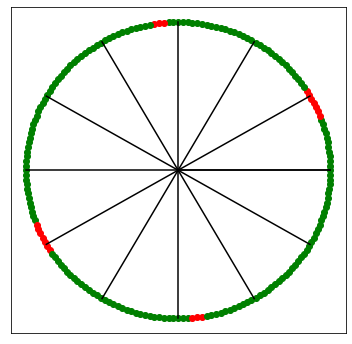}
    \subcaption{}
  \end{minipage} 
  \begin{minipage}[b]{0.24\linewidth}
    \centering
    \includegraphics[keepaspectratio, scale=0.25]{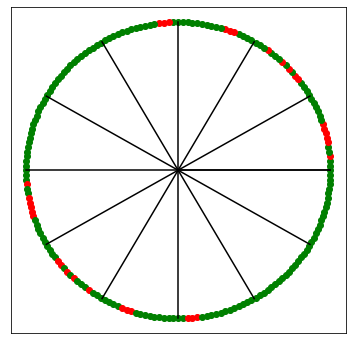}
    \subcaption{}
  \end{minipage}
  \begin{minipage}[b]{0.25\linewidth}
    \centering
    \includegraphics[keepaspectratio, scale=0.25]{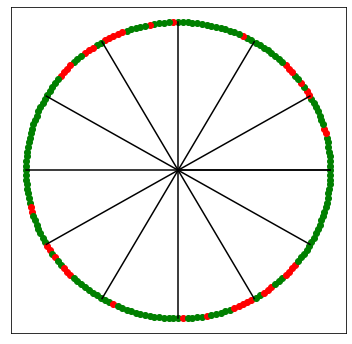}
    \subcaption{}
  \end{minipage} 
  \caption{\Add{($T^{2}_{N}$ case)} The range of $\theta \in [0,2\pi)$ in which the paradox occurs.
  The red  \Add{segment} corresponds to positive paradox. The green  \Add{segment} corresponds that
  the paradox does not occur.
    $T=1000,\,M=\{0\},\,m=1,\,(n_1,n_2)=(1,1),\,U_1=S\cdot C(4\pi/12,5\pi/12,23\pi/12),\quad U_2=S\cdot C(4\pi/12,5\pi/12,\theta)$.
  (a) $N=9$, (b) $N=16$, (c) $N=25$, (d) $N=36$.}
  \label{fig:scatter4}
\end{figure}

\clearpage
\section{{\bf Rigorous results} \label{sec4}}
This section deals with some rigorous results, inspierd by numerical results in Section 3, in particular, Figs.\ \ref{fig:scatter2} and \Add{
  \ref{fig:scatter4}}.
\Add{We rewrite coin space $\left\{ \ket{0}, \ket{1}\right\}$ as
\begin{align*}
  \mathcal{H}^{c}=\{\ket{L} , \ket{R}\}
\end{align*}
to clearly distinguish $\ket{0}$ in coin space from $\ket{0}$ in position space, where $\ket{L}=\ket{0}$ and $\ket{R}=\ket{1}$.
}
We suppose that the coin operators are given by 
\begin{align*}
  C_1&= (-I_2)\otimes\ket{0}\bra{0}+ C'_1\otimes(I_N-\ket{0}\bra{0}), \\
  C_2&= (-I_2)\otimes\ket{0}\bra{0}+ C'_2\otimes(I_N-\ket{0}\bra{0}), \\
  \widetilde{C_2}&=(-I_2)\otimes\ket{0}\bra{0}-C'_2\otimes(I_N-\ket{0}\bra{0}), \\
  U_1&=S\cdot C_1 , \qquad U_2=S\cdot C_2, \qquad \widetilde{U_2}=S\cdot \widetilde{C_2}. \\
\end{align*}
Here $C'_1$ and $C'_2$ are $2 \times 2$ matrices.
Then we show the following result.

\begin{theorem}
  \label{thm:d}
  For $T^{d}_{N}$ with $N=$ even, we have
  \begin{align*}
    \sum_{s \in \{L,R\}}|(\bra{s,0})^{\otimes d}\{(U_2U_1)^{\otimes d}\}^{t}\ket{\Psi}|^2=\sum_{s \in \{L,R\}}|(\bra{s,0})^{\otimes d}\{(\widetilde{U_2}U_1)^{\otimes d}\}^{t}\ket{\Psi}|^2,
  \end{align*}
  for $t \in \Z_{>}$ and $\Psi \in \mathcal{H}$.
\end{theorem}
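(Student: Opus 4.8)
The plan is to reduce the whole statement to a single operator identity in the one–dimensional ($d=1$) setting and then recover the $d$-dimensional claim by tensoring. Write $Q=I_2\otimes\ket{0}\bra{0}$ for the projection onto the marked vertex, and set $A=U_2U_1$, $\widetilde{A}=\widetilde{U_2}U_1$. Since $\bigl[(U_2U_1)^{\otimes d}\bigr]^{t}=(A^{t})^{\otimes d}$ and likewise for $\widetilde{A}$, and since $(\bra{s,0})^{\otimes d}(A^{t})^{\otimes d}=(\bra{s,0}A^{t})^{\otimes d}$, the claim for general $d$ follows from the one–dimensional ``row identity''
\begin{align*}
  \bra{s,0}\widetilde{A}^{\,t}=(-1)^{t}\,\bra{s,0}A^{t}\qquad(s\in\{L,R\}),
\end{align*}
because tensoring the $d$ copies produces only the scalar $(-1)^{td}$, whose modulus is $1$, so each summand $|\cdots|^{2}$ is unchanged. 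Thus everything reduces to the operator identity $Q\widetilde{A}^{\,t}=(-1)^{t}QA^{t}$ on $T^{1}_{N}$.

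First I would record the algebraic link between $\widetilde{U_2}$ and $U_2$. Writing $D:=I_2\otimes(2\ket{0}\bra{0}-I_N)=2Q-I$, one checks that $D$ is the identity on the coin register and diagonal in position, hence commutes with every coin operator, so $\widetilde{C_2}=DC_2=C_2D$ and therefore
\begin{align*}
  \widetilde{U_2}=S\widetilde{C_2}=SC_2D=U_2D,\qquad \widetilde{A}=U_2DU_1=U_2(2Q-I)U_1 .
\end{align*}
Expanding $\widetilde{A}^{\,t}=\bigl(2U_2QU_1-U_2U_1\bigr)^{t}$ over the $2^{t}$ choices ``$2Q$ or $-I$'' in the $t$ factors, the all-$(-I)$ term contributes $(-1)^{t}A^{t}$, and I must show every remaining term dies after left–multiplication by $Q$.

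The heart of the argument — and the only place the hypothesis $N$ even is used — is the vanishing lemma $Q\,A^{k}\,U_2\,Q=0$ for all $k\ge 0$, which I would prove by a parity (checkerboard) argument. Introduce $E=I_2\otimes\sum_{j}(-1)^{j}\ket{j}\bra{j}$, so that $ES=-SE$ and $E$ commutes with the coins; each single shift reverses the $E$-parity, and because $N$ is even this $\mathbb{Z}_2$ grading is consistent around the cycle. Since $C_2$ acts as $-I_2$ on the marked site we have $C_2Q=-Q$, hence $U_2Q=SC_2Q=-SQ$, while $A=U_2U_1$ is a product of two shifts and so preserves parity. Reading $QA^{k}U_2Q$ from right to left: the right-hand $Q$ sits at the even site $0$, the factor $U_2$ flips the parity to odd, $A^{k}$ preserves it, and the left $Q$ again demands the even site $0$; the parities clash and the operator is $0$. (For odd $N$ the grading is frustrated and the lemma fails, matching the necessity of $N$ even; the case $N=2$ is covered since $\ket{0}\bra{0}S\ket{0}\bra{0}=0$, i.e.\ a single shift never fixes the marked site on an even cycle.) Granting the lemma, in any surviving term isolate the left-most factor carrying $2Q$, say in block $i$; all blocks to its left carry $-I$, so that portion of the product equals $(-1)^{i-1}2\,QA^{i-1}U_2Q$ times the remaining factors, which is $0$. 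Hence only the all-$(-I)$ term survives, giving $Q\widetilde{A}^{\,t}=(-1)^{t}QA^{t}$.

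Finally I would assemble the pieces: multiplying $Q\widetilde{A}^{\,t}=(-1)^{t}QA^{t}$ on the left by $\bra{s,0}$ and using $\bra{s,0}Q=\bra{s,0}$ yields the row identity for each $s$; squaring, summing over $s\in\{L,R\}$, and tensoring the $d$ copies (which contributes only the unit-modulus factor $(-1)^{td}$) gives the theorem for every $\Psi$ in the relevant space. I expect the parity lemma $QA^{k}U_2Q=0$ to be the sole genuine obstacle: once it is in hand, the combinatorial cancellation of the cross terms and the tensor bookkeeping are routine, and the role of the even-vertex hypothesis is exactly to make the checkerboard grading globally consistent.
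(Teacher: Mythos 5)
Your proposal is correct, and it shares the paper's overall skeleton: everything is reduced to the one-dimensional row identity $\bra{s,0}\widetilde{A}^{\,t}=(-1)^{t}\bra{s,0}A^{t}$ (with $A=U_2U_1$, $\widetilde{A}=\widetilde{U_2}U_1$), which is then tensored to give the $d$-dimensional claim. However, you implement both key steps by genuinely different means. Your multiplicative relation $\widetilde{U_2}=U_2(2Q-I)$ is equivalent to the paper's additive identity $U_2+\widetilde{U_2}=-2\ket{L,1}\bra{L,0}-2\ket{R,N-1}\bra{R,0}$ (Eq.\ (\ref{eq:sum})), and your vanishing lemma $QA^{k}U_2Q=0$ is exactly the paper's Eq.\ (\ref{eq:0}) in operator form; but where the paper proves that vanishing by an induction exhibiting the checkerboard tensor structure $(U_2U_1)^{t}=X\left((U_2U_1)^{t}\right)\otimes Y$ of Eq.\ (\ref{eq:mat}), you prove it with the $\mathbb{Z}_2$-grading operator $E=I_2\otimes\sum_{j}(-1)^{j}\ket{j}\bra{j}$, using $ES=-SE$ and the fact that $E$ commutes with the coins, so that $A$ preserves parity while $U_2Q$ flips it. Likewise, where the paper obtains the row identity by induction on $t$ (Eqs.\ (\ref{eq:induction})--(\ref{eq:t+1})), you expand $\widetilde{A}^{\,t}=(2U_2QU_1-A)^{t}$ into $2^{t}$ terms and kill every cross term by isolating its leftmost $Q$-factor; and your passage to general $d$ is a one-line tensoring of the row identity, whereas the paper re-runs the induction for $\{(\widetilde{U_2}U_1)^{\otimes d}\}^{t}$ in Eq.\ (\ref{eq:induction_d}). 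The trade-off: your grading argument is more conceptual and makes the role of the even-vertex hypothesis completely transparent (the checkerboard grading is consistent around the cycle exactly when $N$ is even, which is also where the paper's matrix structure would break down), while the paper's explicit inductive computation is more elementary and displays the full structure of $(U_2U_1)^{t}$ rather than only the single vanishing fact the proof needs. Both arguments are sound; yours could be stated and verified with less notation.
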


\Add{Before giving the proof of Theorem \ref{thm:d} ($T^{d}_{N}$ case), for a better understanding, we will present the corresponding proof for $T^{1}_{N}$ case (i.e., $d=1$).}

\begin{theorem}
  \label{thm:1}
  For $T^{1}_N$ with even vertices, we have 
  \begin{align*}
    \sum_{s \in \{L,R\}}|\bra{s,0}(U_2U_1)^{t}\ket{\Psi}|^2=\sum_{s \in \{L,R\}}|\bra{s,0}(\widetilde{U_2}U_1)^{t}\ket{\Psi}|^2,
  \end{align*}
  for $t \in \Z_{>}$ and $\Psi \in \mathcal{H}$.
\end{theorem}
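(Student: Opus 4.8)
The plan is to exploit the bipartite (even/odd) structure of the cycle $T^1_N$, which is available precisely because $N$ is even, together with a simple algebraic relation between $\widetilde{U_2}$ and $U_2$. First I would introduce the parity operator $D = I_2\otimes d$ on $\mathcal{H}=\mathcal{H}^c\otimes\mathcal{H}^p$, where $d=\sum_{j=0}^{N-1}(-1)^j\ket{j}\bra{j}$. Since $N$ is even, $d$ is consistent across the wrap-around edge $\{N-1,0\}$, and a direct computation gives $DSD=-S$, while $D$ commutes with every coin operator (the coins are diagonal in the position label). Hence $DU_1D=-U_1$, $DU_2D=-U_2$ and $D\widetilde{U_2}D=-\widetilde{U_2}$, so each single step flips parity and the two-step operators $U_2U_1$ and $\widetilde{U_2}U_1$ preserve the eigenspaces $\mathcal{H}_\pm$ of $D$. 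Writing $P_0=I_2\otimes\ket{0}\bra{0}$, the quantity in the theorem is $\|P_0(\cdot)^t\ket{\Psi}\|^2$; since vertex $0$ has even parity, $P_0$ annihilates $\mathcal{H}_-$, so only the component $\Psi_+\in\mathcal{H}_+$ of $\Psi$ contributes.

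The second ingredient is the identity $\widetilde{U_2}=U_2\,\mathcal{Q}$, where $\mathcal{Q}=I_2\otimes(2\ket{0}\bra{0}-I_N)=2P_0-I$. This follows by checking $\widetilde{C_2}=C_2\mathcal{Q}$ directly from the block forms of $C_2$ and $\widetilde{C_2}$ and then left-multiplying by $S$. The key observation is that $\mathcal{Q}$ acts as $+I$ only at vertex $0$ and as $-I$ everywhere else; in particular $\mathcal{Q}=-I$ on all of $\mathcal{H}_-$, because an odd-parity state has no support at the (even) vertex $0$. Consequently, for $\Psi\in\mathcal{H}_+$ we have $U_1\Psi\in\mathcal{H}_-$, so $\mathcal{Q}U_1\Psi=-U_1\Psi$ and therefore
\[
  \widetilde{U_2}U_1\ket{\Psi} = U_2\mathcal{Q}U_1\ket{\Psi} = -U_2U_1\ket{\Psi}.
\]
Thus $\widetilde{U_2}U_1=-U_2U_1$ as operators on $\mathcal{H}_+$, and iterating (both operators keep $\mathcal{H}_+$ invariant) gives $(\widetilde{U_2}U_1)^t=(-1)^t(U_2U_1)^t$ on $\mathcal{H}_+$.

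Finally I would combine the two steps. For an arbitrary $\Psi$, restricting to $\Psi_+$ and using that $P_0$ kills the odd sector yields $P_0(\widetilde{U_2}U_1)^t\ket{\Psi}=(-1)^t\,P_0(U_2U_1)^t\ket{\Psi}$, a global sign that disappears on taking $\|\cdot\|^2=\sum_{s\in\{L,R\}}|\bra{s,0}\,\cdot\,\ket{\Psi}|^2$; this is exactly the claimed equality for every $t\in\Z_{>}$ and every $\Psi\in\mathcal{H}$.

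I expect the main obstacle to be conceptual rather than computational: one must spot the factorization $\widetilde{U_2}=U_2\mathcal{Q}$ and recognize that $\mathcal{Q}$ collapses to $-I$ on the odd-parity sector, which is exactly where $U_1$ sends the only component of $\Psi$ that the final measurement at vertex $0$ can detect. The evenness of $N$ enters solely to guarantee that the parity operator $D$ is well defined on the cycle (equivalently, that the graph is bipartite and $DSD=-S$ holds even across the wrap-around edge), which is why one should not expect the identity to survive for odd $N$.
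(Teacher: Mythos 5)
Your proof is correct, and it reaches the conclusion by a genuinely different (and cleaner) route than the paper. The paper's proof consists of two explicit inductions: first it shows that $(U_2U_1)^t$ factors as $X\left((U_2U_1)^t\right)\otimes Y$ with $Y$ having a checkerboard zero pattern, which it uses only to conclude $\bra{s,0}(U_2U_1)^t\ket{L,1}=\bra{s,0}(U_2U_1)^t\ket{R,N-1}=0$; then, using the \emph{additive} relation $U_2+\widetilde{U_2}=-2\ket{L,1}\bra{L,0}-2\ket{R,N-1}\bra{R,0}$, it runs a second induction to get the matrix-element identity $\bra{s,0}(\widetilde{U_2}U_1)^t\ket{\Psi}=(-1)^t\bra{s,0}(U_2U_1)^t\ket{\Psi}$. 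Your argument packages both inductions as symmetry statements: the checkerboard structure is exactly the fact that $U_2U_1$ and $\widetilde{U_2}U_1$ commute with the parity operator $D$ (which is where $N$ even enters, just as in the paper), and instead of the additive relation you use the \emph{multiplicative} factorization $\widetilde{U_2}=U_2(2P_0-I)$, which collapses to $\widetilde{U_2}U_1=-U_2U_1$ on $\mathcal{H}_+$ because $U_1$ maps $\mathcal{H}_+$ into $\mathcal{H}_-$, where $2P_0-I=-I$. This yields the stronger operator identity $(\widetilde{U_2}U_1)^t=(-1)^t(U_2U_1)^t$ on all of $\mathcal{H}_+$ (valid at every even vertex, not only at the marked vertex $0$), from which the paper's Eq.\ (\ref{eq:induction}) follows by projecting with $P_0$ and discarding the odd sector. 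What the paper's approach buys is elementary self-containedness — raw matrix manipulation with no symmetry language; what yours buys is brevity, a transparent explanation of exactly where evenness of $N$ is used (bipartiteness, $DSD=-S$ across the wrap-around edge), and a form of the key identity that would extend immediately to other even marked vertices and tensorizes directly to give the $T^d_N$ case of Theorem \ref{thm:d}.
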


\begin{proof}[{\bf Proof of Theorem \ref{thm:1}}]
  We will \Add{solve} the following equation by induction with respected to $t$:
  \begin{align}
    \label{eq:mat}
    (U_2U_1)^t=
    X\left((U_2U_1)^{t}\right)
    \otimes 
    \Add{Y},
  \end{align}
  \Add{
    where $X(A)$ is a $2 \times 2$ matrix determined by a $2N \times 2N$ matrix $A$
    and $Y=\left(y(i,j)\right)_{i,j=1,2,...,N}$ is an $N \times N$ matrix where $(i,j)$ component, i.e,
    $y(i,j)$, with $i+j=$ odd is zero and that with $i+j=$ even is an arbitrary complex number denoted by $*$.
  }
  \Add{
    For example, if $N=4$, then $Y$ is given by
    \begin{align*}
      Y=
      \begin{pmatrix}
        * & 0 & * & 0 \\
        0 & * & 0 & * \\
        * & 0 & * & 0 \\
        0 & * & 0 & * \\
      \end{pmatrix}.
    \end{align*}
  }
  When $t=1$, we get
  \begin{align}
    \label{eq:mat_t=1}
    U_1
    &=S\cdot C_1 \notag\\
    &=\{\Add {\ket{L}\bra{L}} \otimes \sum \limits _{j=0}^{N-1} \ket{j+1}\bra{j}+\Add {\ket{R}\bra{R}} \otimes \sum \limits _{j=0}^{N-1} \ket{j-1}\bra{j}\} \notag\\
    & \qquad \cdot \{(-I_2)\otimes\ket{0}\bra{0}+ C'_1\otimes(I_N-\ket{0}\bra{0})\} \notag\\
    &=-\Add {\ket{L}\bra{L}} \otimes \ket{1}\bra{0}-\Add {\ket{R}\bra{R}} \otimes \ket{N-1}\bra{0} \notag\\
    & \qquad +\Add {\ket{L}\bra{L}} C'_1 \otimes \sum \limits _{j=1}^{N-1} \ket{j+1}\bra{j}+ \Add {\ket{R}\bra{R}} C'_1 \otimes \sum \limits _{j=1}^{N-1} \ket{j-1}\bra{j}. 
  \end{align}
  Then Eq.\ (\ref{eq:mat_t=1}) is rewritten as 
  \begin{align}
    \label{eq:mat_1}
    U_1=
    X(U_1) \otimes
    \begin{pmatrix}
      0 & * & \dots  & 0 & * \\
      * & 0 & \dots  & 0 & 0 \\
      \vdots & \vdots & \ddots & \vdots &\vdots \\
      0 & 0 & \dots  & 0 & * \\
      * & 0 & \dots & * & 0
    \end{pmatrix}.
  \end{align}
  By using a similar method, $U_2$ has the same form of the right-hand side of Eq.\ (\ref{eq:mat_1}).
  Therefore we see
  \begin{align*}
    U_2U_1
    &= \left\{X(U_2) \otimes
    \begin{pmatrix}
      0 & * & \dots  & 0 & * \\
      * & 0 & \dots  & 0 & 0 \\
      \vdots & \vdots & \ddots & \vdots &\vdots \\
      0 & 0 & \dots  & 0 & * \\
      * & 0 & \dots & * & 0
    \end{pmatrix}
    \right\}
    \cdot 
    \left\{
      X(U_1) \otimes
    \begin{pmatrix}
      0 & * & \dots  & 0 & * \\
      * & 0 & \dots  & 0 & 0 \\
      \vdots & \vdots & \ddots & \vdots &\vdots \\
      0 & 0 & \dots  & 0 & * \\
      * & 0 & \dots & * & 0
    \end{pmatrix} 
    \right\} \\
    &=X(U_2U_1)\otimes
    \begin{pmatrix}
      * & 0 & \dots  & * & 0 \\
      0 & * & \dots  & 0 & * \\
      \vdots & \vdots & \ddots & \vdots &\vdots \\
      * & 0 & \dots  & * & 0 \\
      0 & * & \dots & 0 & *
    \end{pmatrix}.
  \end{align*}
  \Add{We should remark $X(U_2)X(U_1)=X(U_2U_1)$.}
  Hence Eq.\ (\ref{eq:mat}) is correct for $t=1$.
  Assume that Eq.\ (\ref{eq:mat}) holds for $t$.
  When $t+1$, we compute
  \begin{align*}
    (U_2U_1)^{t+1}
    &=(U_2U_1)^{t} \cdot U_2U_1 \\
    &= \left\{X\left((U_2U_1)^{t}\right) \otimes 
    \begin{pmatrix}
      * & 0 & \dots  & * & 0 \\
      0 & * & \dots  & 0 & * \\
      \vdots & \vdots & \ddots & \vdots &\vdots \\
      * & 0 & \dots  & * & 0 \\
      0 & * & \dots & 0 & *
    \end{pmatrix}\right\}
    \cdot 
    \left\{X(U_2U_1) \otimes 
    \begin{pmatrix}
      * & 0 & \dots  & * & 0 \\
      0 & * & \dots  & 0 & * \\
      \vdots & \vdots & \ddots & \vdots &\vdots \\
      * & 0 & \dots  & * & 0 \\
      0 & * & \dots & 0 & *
    \end{pmatrix}\right\} \\
    &=X\left((U_2U_1)^{t+1}\right) \otimes 
    \begin{pmatrix}
      * & 0 & \dots  & * & 0 \\
      0 & * & \dots  & 0 & * \\
      \vdots & \vdots & \ddots & \vdots &\vdots \\
      * & 0 & \dots  & * & 0 \\
      0 & * & \dots & 0 & *
    \end{pmatrix}.
  \end{align*}
  By induction, Eq.\ (\ref{eq:mat}) holds for any $t \in \mathbb{Z}_{>}$.
  Then it follows from Eq.\ (\ref{eq:mat}) that
  \begin{align}
    \label{eq:0}
    \bra{s,0}(U_2U_1)^{t} \ket{\Add{L},1}=\bra{s,0}(U_2U_1)^{t}\ket{\Add{R},N-1}=0.
  \end{align}
  On the other hand, the definition of the coin operator yields
  \begin{align*}
    C_2+\widetilde{C_2}=-2I_2 \otimes \ket{0}\bra{0}.
  \end{align*}
  Thus we obtain
  \begin{align}
    \label{eq:sum}
    U_2+\widetilde{U_2}
    &=S\cdot (C_2+\widetilde{C_2}) \notag\\
    &=\big\{\Add {\ket{L}\bra{L}} \otimes \sum \limits _{j=0}^{N-1} \ket{j+1}\bra{j}+\Add {\ket{R}\bra{R}} \otimes \sum \limits _{j=0}^{N-1} \ket{j-1}\bra{j} \big\} \cdot \left(-2I_2 \otimes \ket{0}\bra{0} \right) \notag\\
    &=-2\left( \Add {\ket{L}\bra{L}} \otimes \ket{1}\bra{0} + \Add {\ket{R}\bra{R}} \otimes \ket{N-1}\bra{0}\right) \notag\\
    &=-2\ket{\Add{L},1}\bra{\Add{L},0}-2\ket{\Add{R},N-1}\bra{\Add{R},0}.
  \end{align}
  We will show the following equation by induction with respected to $t$:
  \begin{align}
    \label{eq:induction}
    \bra{s,0}(\widetilde{U_2}U_1)^t\ket{\Psi}=(-1)^t\bra{s,0}(U_2U_1)^t\ket{\Psi}.
  \end{align}
  When $t=1$, Eq.\ (\ref{eq:sum}) gives
  \begin{align}
    \label{eq:u2u1}
    \bra{s,0}\widetilde{U_2}U_1\ket{\Psi}
    &=-\bra{s,0}U_2U_1\ket{\Psi}-2\braket{s,0|\Add{L},1}\bra{\Add{L},0}U_1\ket{\Psi}
     -2\braket{s,0|\Add{R},N-1}\bra{\Add{R},0}U_1\ket{\Psi} \notag\\
    &=-\bra{s,0}U_2U_1\ket{\Psi}.
  \end{align}
  Hence Eq.\ (\ref{eq:induction}) is correct for $t=1$.
  Next we assume that Eq.\ (\ref{eq:induction}) holds for $t \Erase{(\geq 1)}$.
  When $t+1$, the assumption on $t$ implies
  \begin{align*}
    \bra{s,0}(\widetilde{U_2}U_1)^{t+1}\ket{\Psi}
    &=\bra{s,0} (\widetilde{U_2}U_1)^{t}\widetilde{U_2}U_1\ket{\Psi} \\
    &=(-1)^{t}\bra{s,0} (U_2U_1)^{t}\widetilde{U_2}U_1\ket{\Psi} \\
    &=(-1)^{t}\bra{s,0}(U_2U_1)^{t} \Add{ \Big\{ }-U_2-2\ket{\Add{L},1}\bra{\Add{L},0}
    -2\ket{\Add{R},N-1}\bra{\Add{R},0} \Add {\Big\} }U_1\ket{\Psi}.
  \end{align*}
  Therefore we see
  \begin{align}
    \label{eq:t+1}
    \bra{s,0}(\widetilde{U_2}U_1)^{t+1}\ket{\Psi} 
    &=(-1)^{t+1}\bra{s,0}(U_2U_1)^{t+1}\ket{\Psi} \notag\\
    &\quad -2\cdot (-1)^{t}\bra{s,0}(U_2U_1)^{t}\ket{\Add{L},1}\bra{\Add{L},0}U_1\ket{\Psi} \notag\\
    &\quad -2\cdot (-1)^{t}\bra{s,0}(U_2U_1)^{t}\ket{\Add{R},N-1}\bra{\Add{R},0}U_1\ket{\Psi}.
  \end{align}
  Combining Eq.\ (\ref{eq:0}) with Eq.\ (\ref{eq:t+1}) yields
  \begin{align*}
    \bra{s,0}(\widetilde{U_2}U_1)^{t+1}\ket{\Psi}
    =(-1)^{t+1}\bra{s,0}(U_2U_1)^{t+1}\ket{\Psi}.
  \end{align*}
  By induction, 
  Eq.\ (\ref{eq:induction}) is true for any $t \in \Z_{>}$.
  Then it follows from Eq.\ (\ref{eq:induction}) that 
  we have the desired conclusion:
  \begin{align*}
    \sum_{\Add{s \in \{L,R\}}}|\bra{s,0}(\widetilde{U_2}U_1)^{t}\ket{\Psi}|^2
    &=\sum_{\Add{s \in \{L,R\}}}|(-1)^n\bra{s,0}(U_2U_1)^t\ket{\Psi}|^2 \\
    &=\sum_{\Add{s \in \{L,R\}}}|\bra{s,0}(U_2U_1)^t\ket{\Psi}|^2.
  \end{align*}
\end{proof}

\Add{From now on, in a similar way, we consider the corresponding result for the general $T^{d}_{N}$ case.}


  \begin{proof}[\Add{{\bf Proof of Theorem \ref{thm:d}}}]
    We will show the following equation by induction with respected to $t$:
  \begin{align}
    \label{eq:induction_d}
    (\bra{s,0})^{\otimes d}\{(\widetilde{U_2}U_1)^{\otimes d}\}^t\ket{\Psi}=(-1)^{dt}(\bra{s,0})^{\otimes d}\{(U_2U_1)^{\otimes d}\}^t\ket{\Psi}.
  \end{align}
  When $t=1$, by using notations in Eqs.\ (\ref{eq:sum}) and (\ref{eq:u2u1}), we get
  \begin{align*}
    (\bra{s,0})^{\otimes d} (\widetilde{U_2}U_1)^{\otimes d}\ket{\Psi}
    &=(\bra{s,0}\widetilde{U_2}U_1)^{\otimes d} \ket{\Psi} \\
    &=(-\bra{s,0}U_2U_1)^{\otimes d}\ket{\Psi} \\
    &=(-1)^{d}(\bra{s,0})^{\otimes d} (U_2U_1)^{\otimes d} \ket{\Psi}.
  \end{align*}
  Hence Eq.\ (\ref{eq:induction_d}) is correct for $t=1$.
  Next we assume that Eq.\ (\ref{eq:induction_d}) holds for $t$.
  When $t+1$, we compute 
  \begin{align*}
    (\bra{s,0})^{\otimes d}\{(\widetilde{U_2}U_1)^{\otimes d}\}^{t+1}\ket{\Psi}
    &=(\bra{s,0})^{\otimes d} \{(\widetilde{U_2}U_1)^{\otimes d}\}^{t} (\widetilde{U_2}U_1)^{\otimes d}\ket{\Psi} \\
    &=(-1)^{dt}(\bra{s,0})^{\otimes d} \{(U_2U_1)^{\otimes d}\}^{t}(\widetilde{U_2}U_1)^{\otimes d}\ket{\Psi} \\
    &=(-1)^{dt}(\bra{s,0})^{\otimes d} \{(U_2U_1)^{\otimes d}\}^{t} \\
    &\quad \times ( -U_2U_1-2\ket{L,1}\bra{L,0}U_1-2\ket{R,N-1}\bra{R,0}U_1 )^{\otimes d}\ket{\Psi}.
  \end{align*}
  Therefore we see
  \begin{align}
    \label{eq:t+1_d}
    (\bra{s,0})^{\otimes d}\{(\widetilde{U_2}U_1)^{\otimes d}\}^{t+1}\ket{\Psi}
    &=(-1)^{dt} \Big\{-\bra{s,0}(U_2U_1)^{t+1} -2\bra{s,0}(U_2U_1)^{t}\ket{L,1}\bra{L,0}U_1 \notag\\
    &\quad -2\bra{s,0}(U_2U_1)^{t}\ket{R,N-1}\bra{R,0}U_1 \Big\}^{\otimes d}\ket{\Psi}.
  \end{align}
  Combining Eq.\ (\ref{eq:0}) with Eq.\ (\ref{eq:t+1_d}) yields
  \begin{align*}
    (\bra{s,0})^{\otimes d}\{(\widetilde{U_2}U_1)^{\otimes d}\}^{t+1}\ket{\Psi}
    &=(-1)^{dt}\{-\bra{s,0}(U_2U_1)^{t+1}\}^{\otimes d} \ket{\Psi} \\
    &=(-1)^{d(t+1)}(\bra{s,0})^{\otimes d}\{(U_2U_1)^{\otimes d}\}^{t+1}\ket{\Psi}.
  \end{align*}
  By induction, Eq.\ (\ref{eq:induction_d}) is true for any $t \in \Z_{>}$.
  Then it follows from Eq.\ (\ref{eq:induction_d}) that 
  we have the desired conclusion:
  \begin{align*}
    \sum_{s \in \{L,R\}}|(\bra{s,0})^{\otimes d}\{(U_2U_1)^{\otimes d}\}^{t}\ket{\Psi}|^2
    &=\sum_{s \in \{L,R\}}|(-1)^{dt}(\bra{s,0})^{\otimes d}\{(U_2U_1)^{\otimes d}\}^t\ket{\Psi}|^2 \\
    &=\sum_{s \in \{L,R\}}|(\bra{s,0})^{\otimes d}\{(U_2U_1)^{\otimes d}\}^t\ket{\Psi}|^2.
  \end{align*}
  \end{proof}

  We should remark that Theorem \ref{thm:d} implies that the success probability for $U_2U_1$ is 
equal to that for $\widetilde{U_2}U_1$.
Hence Theorem \ref{thm:d} gives a proof that the range in which the paradox occurs
is symmetric across the origin, see Figs.\ \ref{fig:scatter2} \Add{and \ref{fig:scatter4}}.

\section{Conclusion \label{sec5}}
The present paper proposed a new type of Parrondo's game via QW search and we discovered both positive and negative paradoxes on \Add{$T^{1}_{N}$ and $T^{2}_{N}$}.
In addition, our numerical simulations confirmed that the paradox exists for some parameters,
such as the number of vertices $N$, the number of marked vertices $m$ and the number of the combination of two unitary operators $(n_1,n_2)$.
Moreover we show the range in which the paradox occurs is symmetric about the origin on \Add{$T^{d}_{N}$} with even vertices and one marked vertex.
One of the future problems would be to investigate how paradoxes behave on other graphs, for example, complete graph and hypercube graph\Erase{and $d$-dimensional torus}.
Another interesting problem is to analyze the parameters of the coin operators generating paradoxes.
We think that the Parrondo game on QW search has one of the possibilities to improve quantum search \Add{by combining} bad search algorithms.
\section*{Data Availability}
Our manuscript has no associated data.

\section*{Conflicts of interest}
The authors declare no conflict of interest.

\end{document}